\pgfplotsset{compat=1.13}
\renewcommand{\backref}[1]{}
\renewcommand{\backrefalt}[4]{%
\ifcase #1 %
\or
[p.\ #2]%
\else
[pp.\ #2]%
\fi}
\newtheorem{theorem}{Theorem}
\newtheorem{corollary}[theorem]{Corollary}
\newtheorem{proposition}[theorem]{Proposition}
\newcommand{\Units}{\mathbb{T}}
\newcommand{\poly}{\mathrm{poly}}
\newcommand{\AND}{\normalfont\textsc{AND}} 
\newcommand{\OR}{\normalfont\textsc{OR}}
\newcommand{\NOT}{\normalfont\textsc{NOT}}
\newcommand{\CNOT}{\mathrm{CNOT}}
\newcommand{\ind}{\mathop{\mathds{1}}}
\newcommand{\range}[2]{{[#1\mathrel{:}#2]}}
\newcommand{\controlu}{*-=[][F]{\phantom{\bullet}}}
\newcommand{\multistate}[2]{*+{\hphantom{#2}} \POS[0,0].[#1,0] !C *{#2} \POS[0,0].[#1,0] \drop\frm{}}
\newcommand{\ghoststate}[1]{*+{\hphantom{#1}} }
\newcommand{\ccteq}[1]{\multistate{#1}{\cong}}
\newcommand{\ccteqg}{\ghoststate{\cong}}
\begin{document}

\title{Quantum Mass Production Theorems}
\author{William Kretschmer\thanks{University of Texas at Austin. \ Email:
\texttt{kretsch@cs.utexas.edu}. \ Supported by an NDSEG Fellowship.}}
\date{}
\maketitle

\begin{abstract}
We prove that for any $n$-qubit unitary transformation $U$ and for any $r = 2^{o(n / \log n)}$, there exists a quantum circuit to implement $U^{\otimes r}$ with at most $O(4^n)$ gates. This asymptotically equals the number of gates needed to implement just a \textit{single} copy of a worst-case $U$. We also establish analogous results for quantum states and diagonal unitary transformations. Our techniques are based on the work of Uhlig [Math. Notes 1974], who proved a similar mass production theorem for Boolean functions.
\end{abstract}

\section{Introduction}
\label{sec:intro}

If a computational task requires $c$ resources, then common sense dictates that repeating the same task $r$ times should require roughly $c \cdot r$ resources. In many settings, 
including query complexity \cite{JKS10-direct-sum} and communication complexity \cite{JRS03-direct-sum,BBCR10-direct-sum}, this intuition can be made rigorous: such results are known as \textit{direct sum theorems}. 
Closely related are \textit{direct product theorems},
which show that, with a fixed computational budget, the probability of successfully performing $r$ independent tasks decays in $r$. We recommend \cite[Chapter 1]{Dru12-thesis} for a good overview of the topic.

Nevertheless, direct sum and direct product theorems are not universal. Some computational settings exhibit a ``mass production'' phenomenon, in which the cost of performing the same task many times in parallel does \textit{not} scale linearly with the number of repetitions. A well-known example \cite{Pau76-disj,Dru12-thesis} is based on the circuit complexity of matrix-vector multiplication. For a matrix $M \in \{0,1\}^{n \times n}$, define $f_M: \{0,1\}^n \to \{0,1\}^n$ by $f_M(v) = Mv$, where addition and multiplication are taken mod $2$. Then a simple counting argument implies that for most $M$, the complexity of implementing $f_M$ via a Boolean circuit is at least $\Omega(n^2 / \log n)$, as measured by the number of $2$-bit $\AND$, $\OR$, and $\NOT$ gates. Yet, by observing that $f_M^n$ (i.e. $f_M$ repeated $n$ times) is simply a matrix-matrix multiplication, we find that the cost of implementing $f_M^n$ is only $O(n^\omega)$, where $\omega < 2.38$ is the exponent of matrix multiplication \cite{AVW21-mmul,DWZ22-mmul}---substantially less than the naive bound of $O(n^3)$.

One might be left with the impression that such mass production phenomena can only occur for extremely special functions, like matrix multiplication, that have a particular algebraic or combinatorial structure. Remarkably, this intuition fails dramatically in the setting of Boolean circuit complexity. A theorem of Uhlig \cite{Uhl74-russian,Uhl74-english,Uhl92-multiple} shows that for \textit{any} Boolean function $f: \{0,1\}^n \to \{0,1\}$ and for any $r = 2^{o(n / \log n)}$, there exists a Boolean circuit implementing $f^r$ with at most $O\left(\frac{2^n}{n}\right)$ gates. Asymptotically, this equals the number of gates needed to evaluate a worst-case $f$ on a \textit{single} input, by the well-known counting argument of Shannon \cite{Sha49-circuit}. In fact, Uhlig even showed that the leading constant in the big-$O$ does not increase with $r$, and hence arbitrary Boolean functions can be mass produced with essentially no overhead.

\subsection{This Work}
In this work, we consider the natural question of whether a similar mass production phenomenon holds for quantum circuit complexity. Our question is well-motivated by recent works demonstrating that for certain learning tasks, algorithms with access to many copies of a quantum state on a quantum memory can be exponentially more powerful than algorithms that have access only to single copies of the state at a time \cite{CCHL22-memory,HBC+22-experiments,Car22-ptm}. Indeed, these results suggest that optimizing the complexity of mass producing quantum states and processes could have valuable applications. We also view our question as interesting from a purely theoretical perspective, especially considering that Uhlig's theorem for classical functions has recently found complexity-theoretic applications in characterizing the minimum circuit size problem \cite{RS21-kt,Hir22-mcsp}.

For simplicity, we consider quantum circuit complexity in the setting of qubit quantum circuits, using the universal gate set of arbitrary single-qubit gates plus $\CNOT$ gates with all-to-all connectivity. We also allow ancilla qubits initialized to $\ket{0}$, so long as they are reset to $\ket{0}$ at the end of the computation. We measure circuit complexity in terms of the $\CNOT$ count. This measure is justified by the fact that multiple-qubit gates are more error-prone and expensive to implement than single-qubit gates, and also by the observation that the number of single-qubit gates is related to the $\CNOT$ count by at most a factor of $4$ in any irredundant circuit.

In analogy with Uhlig's theorem \cite{Uhl74-russian,Uhl74-english,Uhl92-multiple}, our main result establishes mass production theorems for both quantum states and unitary transformations.

\begin{restatable}{theorem}{thmstatemassproduce}
\label{thm:state_mass_prod}
Let $\ket{\psi}$ be an $n$-qubit quantum state, and let $r = 2^{o(n / \log n)}$. Then there exists a quantum circuit with at most $(1 + o(1))2^n$ $\CNOT$ gates to prepare $\ket{\psi}^{\otimes r}$.
\end{restatable}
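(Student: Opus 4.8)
The plan is to follow Uhlig's mass-production strategy, carried over from Boolean circuits to quantum state preparation. I would start from the standard fact that any $n$-qubit state $\ket{\psi}$ can be prepared by a cascade of $n$ uniformly controlled single-qubit rotations — the $j$-th acting on qubit $j$ controlled by qubits $1,\dots,j-1$ — where the $j$-th such gate decomposes into a \emph{fixed} Gray-code pattern of $2^{j-1}$ $\CNOT$ gates interleaved with $2^{j-1}$ single-qubit gates, for a total of $(1+o(1))2^n$ $\CNOT$s. The crucial point is that the $\CNOT$ pattern in each layer is input-independent, and the only data depending on $\ket{\psi}$ are the single-qubit rotation angles; when we prepare $\ket{\psi}^{\otimes r}$, these angle tables coincide across all $r$ copies. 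This shared ``dictionary'' of angles plays the role of the input-independent subcircuits that Uhlig amortizes across repetitions.

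Concretely, I would decompose the preparation of $\ket{\psi}$ via a register split into a short register $P$ of $p$ qubits and a long register $Q$ of $n-p$ qubits, writing $\ket{\psi} = \sum_{u \in \{0,1\}^p} \beta_u \ket{u}_P \ket{\phi_u}_Q$ with each $\ket{\phi_u}$ a normalized $(n-p)$-qubit state. Preparing $\ket{\psi}^{\otimes r}$ then reduces to (i) preparing $(\sum_u \beta_u \ket{u})^{\otimes r}$ on the $r$ copies of $P$, and (ii) for each copy $i$, applying the uniformly controlled operation that loads $\ket{\phi_u}$ into $Q_i$ conditioned on $P_i = u$. Step (i) is inexpensive: prepared directly it costs $O(r\,2^p)$, which is $o(2^n)$ at a suitable scale of $p$. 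Step (ii) is where essentially all the cost lies — naively $\sim 2^n$ per copy — and the heart of the argument is to implement all $r$ instances together in $(1+o(1))2^n$ $\CNOT$s, exploiting that the conditional-state table $\{\ket{\phi_u}\}_u$ is the same for every copy. Following Uhlig, I would recurse: reduce the $r$-fold conditional load on $n$ qubits to $r$-fold conditional loads of a controlled number of ``sub-states'' on fewer qubits, using a Lupanov-style two-level partition of the angle table so that the only genuinely per-copy work — address decoders and a final selection/combination step — costs $\poly(n)\cdot r\cdot 2^{o(n)}$, which is negligible next to $2^n$ exactly when $r = 2^{o(n/\log n)}$.

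I expect the main obstacle to be pinning down the leading constant. A naive recursion that peels off one (or a constant number of) qubit(s) at a time incurs a factor growing like $\log n$ and only yields $\widetilde O(2^n)$; getting $(1+o(1))2^n$ requires the careful Lupanov balancing of partition sizes at each level together with tight bookkeeping of how the $o(1)$ error term propagates through the recursion, and it is precisely here that the hypothesis $r = 2^{o(n/\log n)}$ is consumed, since it is what lets the accumulated overhead be absorbed into $(1+o(1))$. A secondary, quantum-specific point is that the shared dictionary cannot be a prepared quantum state — no-cloning forbids copying conditional states to the $r$ registers — so it must be realized entirely as classical angle data inlined into single-qubit gates, with any ancillas used for routing or intermediate computation uncomputed cleanly back to $\ket{0}$; the freedom to use arbitrary single-qubit gates means this inlining incurs no discretization overhead.
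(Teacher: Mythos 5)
There is a genuine gap: your proposal assumes exactly the statement that constitutes the heart of the paper, namely a mass-production lemma for the dominant high-fan-in multiplexed rotations (equivalently, diagonal unitaries), without supplying any mechanism for it. Your outer structure is fine and matches the paper's: decompose state preparation into uniformly controlled (multiplexed) rotations, pay per-copy for the small ones, and amortize the large ones; the paper does this by applying \Cref{cor:multiplexed_single_rotation} to the $(\ell,1)$-multiplexed rotations with $\ell \ge \lceil n/2\rceil$ and the naive bound $r$ times for the rest. But the step you label ``the heart of the argument'' --- implementing all $r$ instances of the conditional load in $(1+o(1))2^n$ $\CNOT$s --- is not established by observing that the angle table is shared across copies. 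If the angles are merely ``inlined as classical data into single-qubit gates,'' each copy still needs its own $\sim 2^{n}$-$\CNOT$ multiplexed-rotation circuit; the entire question is how two (or $2^t$) copies can coherently share the $\CNOT$-heavy subcircuits, and a ``Lupanov-style two-level partition of the angle table'' does not answer it, since Lupanov-type constructions optimize a single instance rather than amortize across instances.

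What is missing is Uhlig's actual sharing mechanism, adapted to phases, which is the content of \Cref{thm:diagonal_mass_produce}: restrict $f$ on the first $k$ select bits to get $f_0,\dots,f_{2^k-1}$, define telescoping factors $g_0=f_0$, $g_\ell=f_{\ell-1}^*f_\ell$, $g_{2^k}=f_{2^k-1}^*$, and use the two decompositions $f_i=\prod_{\ell\le i}g_\ell=\prod_{\ell>i}g_\ell^*$ so that, after comparing the two select values and routing the smaller input through the prefix product and the larger through the suffix product, each expensive $\bar g_\ell$ subcircuit is invoked at most once for both copies; the per-copy overhead is only the $O(n)$-gate comparison/routing circuits, and general $r=2^t$ follows by recursively replacing each shared $\bar g_\ell$ block by its own mass-produced version, with the bound tracked as in the paper's inequality for $s_{n,k,t}$. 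Two quantum-specific issues your sketch does not address also arise here: because multiplication by a complex unit is not an involution (unlike XOR), one cannot mass-produce $f$ itself this way but must mass-produce $\bar f$, i.e.\ the direct sum of the diagonal unitary with its inverse (a multiplexed $R_z$), with a control bit deciding whether $f$ or $f^*$ is applied (and negated on the suffix branch); and the routing must be done reversibly and coherently with ancillas returned to $\ket{0}$ (the circuits $\mathcal{A}_n$, $\mathcal{B}_{n,k,\ell}$ and the correction gate for the case where neither branch fires). Without these ingredients, your argument reduces to asserting the conclusion of the key lemma, so as written it does not constitute a proof.
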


\begin{restatable}{theorem}{thmunitarymassproduce}
\label{thm:unitary_mass_prod}
Let $U$ be an $n$-qubit unitary transformation, and let $r = 2^{o(n / \log n)}$. Then there exists a quantum circuit with at most $(5/2 + o(1))4^n$ $\CNOT$ gates to implement $U^{\otimes r}$.
\end{restatable}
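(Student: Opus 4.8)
The plan is to reduce the unitary case to two ``building block'' mass production results---the state result \Cref{thm:state_mass_prod} and an analogous statement for diagonal unitaries---by peeling $U$ apart with a universal circuit decomposition whose pieces are, after rewriting, either diagonal or uncontrolled unitaries on fewer qubits. First I would establish the diagonal analog of \Cref{thm:state_mass_prod}: for any $n$-qubit diagonal unitary $D$ and any $r = 2^{o(n/\log n)}$, the power $D^{\otimes r}$ has a circuit with $(1+o(1))2^n$ $\CNOT$ gates. This should come out of the same Uhlig-style machinery behind \Cref{thm:state_mass_prod}, since a diagonal unitary is described by a single function $\{0,1\}^n \to \Units$ (its phases)---exactly the kind of object that argument mass produces---and the standard Gray-code circuit for a diagonal unitary (about $2^n$ $\CNOT$s per copy) plays the role of the worst-case circuit being replicated. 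A uniformly controlled single-qubit rotation becomes diagonal after conjugating its target by a fixed single-qubit gate, so its $r$-fold power is likewise implementable with $(1/2+o(1))2^n$ $\CNOT$s.

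Next I would invoke the cosine-sine / Quantum Shannon decomposition: any $n$-qubit $U$ factors as $U = (A_0 \oplus A_1)\,M\,(B_0 \oplus B_1)$ with $A_i, B_i$ unitaries on $n-1$ qubits and $M$ a uniformly controlled $R_y$ rotation, and each ``quantum multiplexor'' $A_0 \oplus A_1$ further factors by simultaneous unitary diagonalization as $(I_2 \otimes V)\,\Delta\,(I_2 \otimes W)$ with $V, W$ on $n-1$ qubits and $\Delta$ an $n$-qubit diagonal unitary. Since tensor powers distribute over products, $(XYZ)^{\otimes r} = X^{\otimes r}Y^{\otimes r}Z^{\otimes r}$, applying this to the fixed $U$ once turns $U^{\otimes r}$ into an alternating product of: $r$-fold powers of arbitrary $(n-1)$-qubit unitaries (handled by recursion), $r$-fold powers of $n$-qubit diagonal unitaries (handled by the diagonal result), and an $r$-fold power of a uniformly controlled rotation. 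Writing $C(n)$ for the $\CNOT$ count achieved for an arbitrary $n$-qubit unitary, this gives a recurrence of roughly the form $C(n) \le 4\,C(n-1) + O(2^n)$, with $C(1) = 0$ since single-qubit gates cost nothing in the $\CNOT$ metric. Unrolling yields $C(n) \le \sum_{m=2}^{n} 4^{n-m}\cdot O(2^m) = O(4^n)$; plugging in the explicit constants for the diagonal and rotation subroutines and choosing the decomposition to minimize the leading coefficient is what lands at $(5/2+o(1))4^n$.

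I expect two places to demand the most care. The first is structural: it is tempting to bound $C(n)$ by running the best single-copy circuit $r$ times, or by recursing with the cosine-sine decomposition and then implementing each multiplexor and rotation layer on the $r$ copies separately---but both strategies give $\Theta(r\cdot 4^n)$ and no savings at all. The entire gain comes from mass producing the diagonal and rotation layers \emph{jointly} across all $r$ copies, which becomes possible only \emph{after} the simultaneous-diagonalization rewriting has replaced each multiplexor by objects the Uhlig-type theorems can digest; one must check that this rewriting genuinely commutes with $(\cdot)^{\otimes r}$ and that the $(n-1)$-qubit sub-unitaries it produces are still completely arbitrary (they are). The second is error control: the $o(1)$ overheads compound over the $\Theta(n)$ recursion levels, and within a level over the many elementary layers, so one needs each layer's excess to be of size $r\cdot\poly(n)$; since $r = 2^{o(n/\log n)}$, even after multiplying by $4^n$ and summing over all levels this remains $o(4^n)$, but making the accounting rigorous---in particular reusing rather than accumulating the ancilla qubits that the state and diagonal subroutines introduce and reset---is the main bookkeeping burden.

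Finally, it is worth noting an alternative that leans on \Cref{thm:state_mass_prod} more directly: decompose $U = V_{2^n-1}\cdots V_1 V_0$ column by column, where $V_j$ sends $\ket{j}$ to the $j$-th column of $U$ while fixing $\ket{0},\dots,\ket{j-1}$, so that each $V_j$ is a (subspace-)controlled state preparation. Mass producing $V_j$ across the $r$ copies with a controlled variant of \Cref{thm:state_mass_prod} costs $(1+o(1))\cdot O(2^n)$, and $\sum_j$ again telescopes to $O(4^n)$; here the crux shifts to implementing the control onto a $(2^n-j)$-dimensional subspace cheaply and in a way that is itself compatible with mass production.
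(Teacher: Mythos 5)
Your plan for the building blocks is on target (the paper indeed first proves an Uhlig-style mass production theorem for multiplexed $R_z$ gates, \Cref{thm:diagonal_mass_produce}, and extends it to multiplexed rotations and single-data-qubit multiplexors), but the way you assemble them into \Cref{thm:unitary_mass_prod} has a genuine gap. Your recurrence $C(n) \le 4\,C(n-1) + O(2^n)$ silently assumes that the diagonal and multiplexed-rotation layers appearing at width $m$ can be mass produced at cost $O(2^m)$ for \emph{all} $r$ copies, for every $m$ down to $2$. But the mass production subroutines only give $(1+o(1))2^m$ under the hypothesis $r = 2^{o(m/\log m)}$, and in your recursion $m$ shrinks to $O(1)$ while $r$ stays as large as, say, $2^{n/\log^2 n}$. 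The failure is not just a lost hypothesis: at level $m$ your scheme produces $\Theta(4^{n-m})$ layers, each implemented by its own sub-circuit for its $r$-fold tensor power, and any entangling $m$-qubit gate $X$ requires $\Omega(r)$ $\CNOT$s to realize $X^{\otimes r}$ (each copy's qubits must be touched by some two-qubit gate). So the levels near the bottom alone cost $\Omega(r\cdot 4^n)$ for a worst-case $U$ --- no savings over naive repetition --- and your ``error control'' estimate that the excess stays $o(4^n)$ breaks precisely because the number of layers at width $m$ grows like $4^{n-m}$ while the per-layer cost is floored at $\Omega(r)$. Stopping the recursion at some width $m_0$ does not help either: the $4^{n-m_0}$ remaining arbitrary $m_0$-qubit blocks would themselves need a mass production theorem, which is exactly what is being proven (infinite regress), or else cost $r\cdot\Theta(4^{m_0})$ each, again giving $\Theta(r\,4^n)$.

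The missing idea is to never produce arbitrary unitaries on fewer than $n$ qubits. The paper applies the demultiplexing identity of \cite[Theorem 11]{SBM06-synthesis} in the other direction: starting from $U$ viewed as a multiplexor with $0$ select qubits and $n$ data qubits, it repeatedly trades data qubits for select qubits, so that $U$ becomes a product of $2^n-1$ multiplexors, \emph{all} with $n-1$ select qubits and a single data qubit ($2^{n-1}$ arbitrary ones and $2^{n-1}-1$ multiplexed $R_y$'s). Every piece keeps full width $n$, so \Cref{cor:multiplexed_single_rotation} and \Cref{cor:multiplexed_single_arbitrary} apply with the same $r = 2^{o(n/\log n)}$, at cost $(1+o(1))2^{n-1}$ and $(4+o(1))2^{n-1}$ per piece respectively, which sums to within the stated $(5/2+o(1))4^n$. (Two smaller remarks: your claim that a multiplexed rotation becomes diagonal after conjugating the data qubit is exactly the paper's route, but note that the Uhlig argument itself must be run on $\bar{f}$, i.e.\ the direct sum of a diagonal with its inverse, since phases, unlike XOR, are not involutive; and your column-by-column alternative leaves the controlled-state-preparation step unresolved and is not what the paper does.)
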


Note that the factor $2^n$ (respectively, $4^n$), in \Cref{thm:state_mass_prod} (respectively, \Cref{thm:unitary_mass_prod}) is optimal, because it asymptotically equals the number of $\CNOT$ gates needed to prepare a \textit{single} copy of an arbitrary $n$-qubit state (respectively, to implement an arbitrary $n$-qubit unitary once),  up to a small multiplicative constant \cite{SBM06-synthesis}. Above, we made the leading constants explicit only to illustrate that they are not too large, and thus to demonstrate that these theorems have some hope of becoming practical. We leave a full optimization of these constants and the factors hidden in the $o(1)$ to future work.


\subsection{Proof Overview}
Our results build heavily on the simple proof of Uhlig's theorem given in \cite{Uhl92-multiple}, which we now briefly summarize. The proof proceeds by first showing that for an arbitrary $f: \{0,1\}^n \to \{0,1\}$, one can compute $2$ copies of $f$ using roughly $\frac{2^n}{n}$ gates---the same cost as is needed to compute a single copy of a worst-case $f$. Then, Uhlig shows that we can generalize to a larger number of repetitions $r$ by a straightforward recursive argument. So, we focus on the $r = 2$ case.

Fix a parameter $k$ do be chosen later, and define for each $0 \le i \le 2^k - 1$ the function $f_i: \{0,1\}^{n-k} \to \{0,1\}$ to be the restriction of $f$ obtained by fixing the first $k$ bits to be the binary representation of $i$. So, for example,
\[
f(\underbrace{0,0,\ldots,0}_{k \text{\rm \ times}},x_{k+1},\ldots,x_n) = f_0(x_{k+1},\ldots,x_n).
\]
Next, we define a set of functions $g_\ell: \{0,1\}^{n-k} \to \{0,1\}$ for each $0 \le \ell \le 2^k$ by:
\begin{itemize}
\item $g_0 = f_0$.
\item $g_\ell = f_{\ell - 1} \oplus f_\ell$ if $1 \le \ell \le 2^k - 1$.
\item $g_{2^k} = f_{2^k - 1}$.
\end{itemize}

Observe that
\begin{equation}
f_i = \bigoplus_{\ell = 0}^{i} g_\ell = \bigoplus_{\ell = i+1}^{2^k} g_\ell.
\end{equation}
Now, suppose that we have a pair of inputs $x,y \in \{0,1\}^n$ to $f$, and our goal is to evaluate $f(x)$ and $f(y)$ simultaneously. Let $i$ and $j$ denote the integers whose binary representations are the first $k$ bits of $x$ and $y$, respectively. Assume without loss of generality that $i \le j$. Uhlig's idea is to evaluate $f(x)$ using the decomposition $f_i = \bigoplus_{\ell = 0}^{i} g_\ell$ and $f(y)$ using $f_j = \bigoplus_{\ell = j+1}^{2^k} g_\ell$. The key observation is that in doing so, we only need to evaluate each $g_\ell$ at most once. The cost of computing $f(x)$ and $f(y)$ this way is dominated by computing the $g_\ell$s. So, the total size of the circuit is roughly
\[
\left(2^k + 1\right) \left(\frac{2^{n-k}}{n-k} \right),
\]
because there are $2^k + 1$ different $g_\ell$s, and each $g_\ell$ is a function on $n - k$ bits. For reasonable choices of $k$, this is asymptotically $(1 + o(1))\frac{2^n}{n}$, as desired.

Our main insight is that the same general approach generalizes straightforwardly from mass producing Boolean functions to mass producing diagonal unitary matrices, which we establish in \Cref{thm:diagonal_mass_produce}. In one sense, the only conceptual change between our proof and Uhlig's is that we work with the group of complex units under multiplication, rather than the group $\{0,1\}$ under XOR. Nevertheless, our proof requires some care, as we do not deal with diagonal matrices directly. Rather, we mass produce the direct sum of a diagonal unitary with its inverse. In other words, for an $n$-qubit diagonal unitary $U$, we find it more convenient to work with the diagonal unitary on $n+1$ qubits that applies $U$ when the last qubit is $\ket{0}$, and $U^\dagger$ when the last qubit is $\ket{1}$. The intuitive reason why we require this change is that the XOR function is its own inverse, whereas multiplication by a complex unit is generally not.

Finally, once we have established \Cref{thm:diagonal_mass_produce} for diagonal unitary transformations, we obtain the mass production theorems for quantum states and general unitary transformations by using well-known decompositions of states and unitaries into diagonal gates \cite{SBM06-synthesis}.


%
%
%
%
%
%

\section{Preliminaries}

\subsection{Basic Notation}
We denote by $\ind\{p\}$ the function that evaluates to $1$ if proposition $p$ is true, and $0$ otherwise. If $\alpha$ is a complex number, we let $\alpha^*$ denote its complex conjugate. We denote by $\Units \coloneqq \{a + bi : |a|^2 + |b|^2 = 1\}$ the set of complex units. For a function $f: \{0,1\}^n \to \Units$, denote by $\bar{f}: \{0,1\}^{n+1} \to \Units$ the function defined by $\bar{f}(x, c) = f(x)^{1 - 2c}$, so that $\bar{f}$ evaluates to $f$ when $c = 0$ and evaluates to $f^*$ when $c = 1$. We freely identify a function $f: \{0,1\}^n \to \Units$ with the corresponding diagonal unitary transformation $U$ that acts as $U\ket{x} = f(x)\ket{x}$ on basis states $x \in \{0,1\}^n$.

We use standard notation for quantum circuits, including $\CNOT$, Toffoli, and Fredkin gates. We also borrow a large amount of notation and terminology from \cite{SBM06-synthesis}, as we detail further below. We define the $x$-, $y$-, and $z$-axis rotations by:
\begin{align*}
R_x(\theta) &= \begin{pmatrix}
\cos(\theta / 2) & i \sin(\theta / 2)\\
i \sin(\theta/2) & \cos(\theta / 2)
\end{pmatrix},\\
R_y(\theta) &= \begin{pmatrix}
\cos(\theta / 2) & \sin(\theta / 2)\\
- \sin(\theta/2) & \cos(\theta / 2)
\end{pmatrix},\\
R_z(\theta) &= \begin{pmatrix}
e^{-i \theta / 2} & 0\\
0 & e^{i \theta / 2}
\end{pmatrix}.
\end{align*}

\subsection{Multiplexors}

A \textit{multiplexor} with $s$ select qubits and $d$ data qubits is a block-diagonal $(s+d)$-qubit unitary transformation that preserves every computational basis state $\ket{x}$ on the select qubits. For brevity, we call such a unitary an $(s,d)$-multiplexor. An $(s,1)$-multiplexor in which all of the diagonal blocks are $R_z$ on the data qubit may alternatively be called a \textit{multiplexed $R_z$} (analogously for $R_x$ and $R_y$). Collectively, multiplexed $R_x$, $R_y$, and $R_z$ are called \textit{multiplexed rotations}. Observe that an $(s,1)$-multiplexed $R_z$ is equivalent to a unitary implementing $\bar{f}$ for some $f: \{0,1\}^s \to \Units$.

We require the following basic fact about implementing multiplexed rotations:

\begin{proposition}[{\cite[Theorem 8]{SBM06-synthesis}}]
\label{prop:multiplexed_rotation_naive}
Let $U$ be an $(n,1)$-multiplexed rotation. Then there exists a quantum circuit with at most $2^n$ $\CNOT$ gates to implement $U$.
\end{proposition}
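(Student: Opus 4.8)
The plan is to prove the proposition by the standard ``Gray code'' synthesis of a uniformly controlled rotation, which makes the bound of $2^n$ $\CNOT$s manifest. First I would reduce the three cases to one: since $R_x$ and $R_y$ are each equal to $R_z$ conjugated by a \emph{fixed} single-qubit unitary on the data qubit (for instance $R_x(\theta) = H R_z(-\theta) H$ and $R_y(\theta) = R_x(-\pi/2)\, R_z(\theta)\, R_x(\pi/2)$), an $(n,1)$-multiplexed $R_x$ or $R_y$ equals an $(n,1)$-multiplexed $R_z$ with the same family of block angles (up to sign), sandwiched between two single-qubit gates acting on the data qubit. Conjugation by single-qubit gates costs no $\CNOT$s, so it suffices to implement an arbitrary $(n,1)$-multiplexed $R_z$, whose diagonal blocks I write as $R_z(\theta_x)$ for $x \in \{0,1\}^n$.

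Next I would write down the ansatz circuit
\[
W \;=\; R_z(\alpha_1)\,\CNOT_{c_1 \to t}\; R_z(\alpha_2)\,\CNOT_{c_2 \to t}\;\cdots\; R_z(\alpha_{2^n})\,\CNOT_{c_{2^n}\to t},
\]
where $t$ is the data qubit, each $R_z(\alpha_j)$ acts on $t$, each $\CNOT_{c_j \to t}$ is controlled by a select qubit $c_j\in\{1,\dots,n\}$, the angles $\alpha_j\in\Reals$ are still to be chosen, and the control pattern $(c_1,\dots,c_{2^n})$ is chosen so that the partial parity vectors $B_m \coloneqq \bigoplus_{i=1}^{m} e_{c_i}\in\{0,1\}^n$ (with $e_c$ the $c$-th standard basis vector, $B_0 = 0$) satisfy both $B_{2^n}=0$ and $B_0,B_1,\dots,B_{2^n-1}$ all distinct. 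Such a pattern is precisely the sequence of flipped coordinates along a cyclic reflected Gray code on $n$ bits, which exists for every $n$.

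I would then check what $W$ computes. Every $\CNOT$ in $W$ targets $t$, so using $B_{2^n}=0$ the net $\CNOT$ action is trivial and $W$ is diagonal; a short bookkeeping gives $W\ket{x}\ket{c} = \exp\!\big(\tfrac{i}{2}(2c-1)\sum_{j=1}^{2^n}(-1)^{\langle B_{j-1},x\rangle}\alpha_j\big)\ket{x}\ket{c}$, since when $R_z(\alpha_j)$ fires the data qubit holds $c \oplus \langle B_{j-1},x\rangle$ (the inner product taken mod $2$) and $R_z(\beta)$ multiplies $\ket{b}$ by $e^{i(2b-1)\beta/2}$. Hence $W$ realizes the desired multiplexed $R_z$ iff $\sum_{j=1}^{2^n}(-1)^{\langle B_{j-1},x\rangle}\alpha_j = \theta_x$ for all $x$. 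Because $B_0,\dots,B_{2^n-1}$ is a permutation of $\{0,1\}^n$, the coefficient matrix $\big[(-1)^{\langle B_{j-1},x\rangle}\big]_{x,\,j}$ is the $2^n\times 2^n$ Walsh--Hadamard matrix up to a column permutation; its square is $2^n I$, so it is invertible and the system has a (real) solution $(\alpha_j)_j$ for \emph{every} target family $(\theta_x)_x$. With these angles $W$ is the required circuit, using exactly $2^n$ $\CNOT$s.

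The one point that needs genuine care is the choice of control pattern in the second step. Solvability of the angle equations only needs $B_0,\dots,B_{2^n-1}$ distinct, but \emph{diagonality} of $W$ --- which is forced, since every multiplexed $R_z$ is diagonal --- additionally requires $B_{2^n}=0$; together these say exactly that $(c_j)$ traces a Hamiltonian \emph{cycle} on the hypercube graph, i.e.\ a cyclic Gray code. Invoking the existence of the standard reflected Gray code for all $n$ closes this gap, and the remaining pieces (the reduction of $R_x,R_y$ to $R_z$ and the phase computation for $W$) are routine.
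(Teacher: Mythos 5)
Your proposal is correct. Note, though, that the paper does not prove this statement at all: it is imported as a black box, quoted directly from \cite[Theorem 8]{SBM06-synthesis}, so there is no internal proof to compare against. What you have written is essentially a self-contained reconstruction of the standard proof of that cited theorem: the Gray-code ansatz of $2^n$ alternating $R_z$ rotations and $\CNOT$s on the data qubit, with the two key points handled correctly --- the cyclic Gray code (Hamiltonian cycle on the hypercube, so $B_{2^n}=0$) is what forces the circuit to be diagonal, and the distinctness of $B_0,\dots,B_{2^n-1}$ makes the angle-matching system a column-permuted Walsh--Hadamard system, hence solvable for every target family $(\theta_x)_x$. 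Your reduction of multiplexed $R_x$ and $R_y$ to multiplexed $R_z$ by conjugating the data qubit with fixed single-qubit gates is also the same trick the paper itself uses later in \Cref{cor:multiplexed_single_rotation}, and it costs no $\CNOT$s, so the count of exactly $2^n$ stands. The only cosmetic quibbles are conventions: with the operator-product ordering you wrote, the time order of the gates is reversed (harmless, since the analysis is consistent if the product is read in circuit order), and the invertibility claim is cleanest phrased as ``the Hadamard matrix squares to $2^n I$, hence it and any column permutation of it are invertible.'' Neither affects correctness.
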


\subsection{Generic Gates}
As in \cite{SBM06-synthesis}, we use circuit diagrams containing \textit{generic gates}. An equivalence of two circuit diagrams containing generic gates means that for any assignment of parameters to the generic gates on one side, there exists an assignment of parameters to the gates on the other side that makes the two circuits compute the same operator. We use the following notation for generic gates:

\bigskip
\begin{tabular}{cp{0.78\textwidth}}
$\Qcircuit @C=1em @R=.7em { & {/} \qw & \gate{\phantom{U}} & \qw }$ &
A generic unitary gate.\\
& \\
$\Qcircuit @C=1em @R=.7em { & {/} \qw & \gate{\Delta} & \qw }$ &
A generic diagonal unitary gate.\\
& \\
$\Qcircuit @C=1em @R=.7em { & \gate{R_z} & \qw }$ & An $R_z$ gate for some unspecified $\theta$. Conventions for
$R_x$ and $R_y$ are analogous.\\ & \\
$\Qcircuit @C=1em @R=.7em {
& {/} \qw & \controlu \qw & \qw\\
& {/} \qw & \gate{\phantom{U}} \qwx & \qw\\
}$ & A generic multiplexor, with select qubits on the upper register and data qubits on the lower register\\ & \\
$\Qcircuit @C=1em @R=.7em {
& {/} \qw & \controlu \qw & \qw\\
& \qw & \gate{R_z} \qwx & \qw\\
}$ & A multiplexed $R_z$. Conventions for
$R_x$ and $R_y$ are analogous.
\end{tabular}

\section{Diagonal Unitaries and Multiplexors}

We begin by generalizing the proof of Uhlig's theorem \cite{Uhl92-multiple} to diagonal unitary matrices (or, more precisely, multiplexed $R_z$ gates).

\begin{theorem}
\label{thm:diagonal_mass_produce}
Let $f: \{0,1\}^n \to \Units$ and let $r = 2^{o(n / \log n)}$. Then there exists a quantum circuit with at most $(1 + o(1))2^n$ $\CNOT$ gates to implement $\bar{f}^{\otimes r}$.
\end{theorem}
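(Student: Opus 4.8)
The plan is to follow Uhlig's two-phase strategy, carried over to the group $\Units$ under multiplication: first handle the case $r = 2$, then bootstrap to general $r = 2^{o(n/\log n)}$ by a recursive composition. The crucial technical adjustment, already flagged in the proof overview, is that $\XOR$ is an involution but multiplication by a complex unit is not, so we cannot directly telescope restrictions of $f$; instead we work with $\bar f$, which bundles $f$ together with $f^*$, and exploit that $f(x)^{-1} = f(x)^* = \bar f(x,1)$.

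For the $r=2$ step, fix a parameter $k$ and, for each $0 \le i \le 2^k - 1$, let $f_i : \{0,1\}^{n-k} \to \Units$ be the restriction of $f$ obtained by fixing the first $k$ input bits to the binary representation of $i$. Define $g_0 = f_0$, $g_\ell = f_\ell \cdot f_{\ell-1}^*$ for $1 \le \ell \le 2^k - 1$, and $g_{2^k} = f_{2^k-1}^*$, so that $f_i = \prod_{\ell=0}^{i} g_\ell$ and $f_i^* = \prod_{\ell=i+1}^{2^k} g_\ell^*$. Now suppose we are given two select strings whose first-$k$-bit values are $i$ and $j$ with $i \le j$; we want to apply $R_z$-type phases corresponding to $\bar f$ on each of two data-plus-remaining-select registers. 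We realize the first copy's phase via the product $\prod_{\ell=0}^{i} g_\ell$ (applied as a sequence of multiplexed rotations on the $n-k$ remaining select qubits, using the $k$ fixed bits merely to decide which of these to "turn on") and the second copy's phase via $\prod_{\ell=j+1}^{2^k} g_\ell^*$; crucially each $g_\ell$ (or $g_\ell^*$) is invoked at most once across the two copies, because the index ranges $[0,i]$ and $[j+1,2^k]$ are disjoint when $i \le j$. The $c$-bits of $\bar f$ are handled by conditioning: when $c=1$ we must apply $f^*$ rather than $f$, which amounts to running the same $g_\ell$ multiplexors with the roles of the two product decompositions swapped (using $f_i^* = \prod_{\ell=i+1}^{2^k} g_\ell^*$), so the "at most once" property is preserved after a Fredkin-style swap controlled on the $c$ qubits that sorts the two select values. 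Each $g_\ell$ is an $(n-k, 1)$-multiplexed rotation (with the remaining $n-k-1$ bits as select and one data bit), costing at most $2^{n-k}$ $\CNOT$s by \Cref{prop:multiplexed_rotation_naive}, and there are $2^k+1$ of them, giving roughly $(2^k+1)2^{n-k}$ gates; the comparator/swap circuitry and the conditioning on the two $c$-qubits contribute only $\poly(k, n)$ overhead, which for $k = \Theta(\log n)$ (say $k$ slightly superlogarithmic but $2^k$ subpolynomial) is swallowed by the $o(1)$.

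To reach general $r$, I would use the standard recursion: to mass-produce $r$ copies, group them into $\sqrt{r}$ blocks of $\sqrt{r}$ and recurse, or more directly, observe that the $r = 2$ construction can be iterated $\log r$ times while only slightly increasing $k$ at each level, so that after the recursion the total cost is $(1+o(1))2^n$ provided $\log r = o(n/\log n)$; the bookkeeping here is essentially identical to \cite{Uhl92-multiple}, and one just checks that the extra select bits needed to index among the $r$ copies number $O(\log r) = o(n/\log n)$, hence do not affect the leading term. The main obstacle I anticipate is the bookkeeping in the $r=2$ step: making precise how a multiplexed rotation on $n-k$ qubits, together with the $k$ "outer" select bits and the two $c$-control bits, is wired so that (a) the phase actually applied on each computational basis state of the full $2n+2$-qubit register is exactly $\bar f(x) \bar f(y)$ for the correct $x,y$, and (b) every $g_\ell$-block appears at most once. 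Getting the controlled-swap (Fredkin) reduction right — sorting $(i, c_x)$ versus $(j, c_y)$ so that the disjoint-range argument always applies regardless of which copy wants $f$ and which wants $f^*$ — is the delicate point, and I would isolate it as a separate lemma about composing two copies of a multiplexed-$R_z$ "$\bar f$-gadget."
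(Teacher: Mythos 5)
Your overall architecture (restrictions $f_i$, telescoping factors $g_\ell$, prefix-range for the smaller select value and suffix-range for the larger one, then Uhlig-style recursion) matches the paper, but there is a genuine gap at exactly the point the paper identifies as the crux: how the conjugation bits $c_x,c_y$ are handled. Your stated identity $f_i^* = \prod_{\ell=i+1}^{2^k} g_\ell^*$ is false; since the $g_\ell$ telescope as $\prod_{\ell=i+1}^{2^k} g_\ell = f_i^*$, conjugating every factor gives $\prod_{\ell=i+1}^{2^k} g_\ell^* = f_i$, not $f_i^*$. (The correct identities are $f_i = \prod_{\ell=0}^{i} g_\ell = \prod_{\ell=i+1}^{2^k} g_\ell^*$ and $f_i^* = \prod_{\ell=0}^{i} g_\ell^* = \prod_{\ell=i+1}^{2^k} g_\ell$.) Because of this, your mechanism of ``running the same $g_\ell$ multiplexors with the roles of the two product decompositions swapped when $c=1$'' does not work: whichever way one reads it, either it silently assumes the XOR-style self-inverse property that fails over $\Units$, or it destroys the disjointness of the two index ranges --- e.g.\ if both copies want the conjugate (or the copy with the larger select value does), swapping which copy uses the prefix versus suffix range forces the ranges $[0,i]$ and $[0,j]$, or $[i+1,2^k]$ and $[j+1,2^k]$, to overlap, so some $g_\ell$ must be invoked twice and the gate count doubles, killing the $(1+o(1))2^n$ bound.

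The fix in the paper is different from a Fredkin swap controlled on the $c$ qubits: the assignment prefix-for-$\min$ / suffix-for-$\max$ is decided \emph{only} by comparing the select values, and conjugation is absorbed into the blocks themselves by making each block a $\bar{g}_\ell$ gadget, i.e.\ an $(n-k,1)$-multiplexed $R_z$ whose extra data wire receives $c_m$ for the $\min$ copy's invocations and the \emph{negated} bit $1-c_M$ for the $\max$ copy's invocations (plus a small correction gate undoing the spurious phase $g_\ell(0^{n-k})$ when $\ell$ lies in neither range). This keeps each $\bar{g}_\ell$ used exactly once per pair regardless of $(c_x,c_y)$, and it is also what makes your recursion to general $r$ type-check: the induction hypothesis produces circuits for objects of the form $\bar{g}^{\otimes 2^{t-1}}$, so the inner blocks must be $\bar{g}_\ell$ rather than bare $g_\ell$ multiplexors. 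Two smaller points: your parenthetical choice of $k$ (``superlogarithmic but $2^k$ subpolynomial'') is contradictory --- $k=\lceil \log n\rceil$ suffices --- and the recursion does not introduce ``extra select bits indexing the $r$ copies''; it peels off $k$ select bits per level and replaces the $2^{t-1}$ repeated $\bar{g}_\ell$ blocks by a recursively mass-produced circuit, which is where the bound $(2^k+1)^t(2^{n-tk}+2^t dn)$ comes from.
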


\begin{proof}
Without loss of generality, let $r = 2^t$ for some $t = o(n / \log n)$. Our proof proceeds by induction on $t$: for fixed $k$ (chosen later) and for every $n > k \cdot t$, we construct for each $f : \{0,1\}^n \to \Units$ a circuit $\mathcal{C}_{f,n,k,t}$ computing $\bar{f}^{\otimes 2^t}$. We proceed in order: first we construct $\mathcal{C}_{f,n,k,1}$ for every $n$ and $f$, then $\mathcal{C}_{f,n,k,2}$ for every $n$ and $f$, then $\mathcal{C}_{f,n,k,3
}$ for every $n$ and $f$, and so on. Ultimately, we show that there exists a universal constant $d$ such that the number of $\CNOT$ gates in $\mathcal{C}_{f,n,k,t}$, denoted $s_{n,k,t}$, satisfies the bound:
\begin{equation}
\label{eq:bound_s_nkt}
s_{n,k,t} \le \left(2^k + 1\right)^t \left(2^{n - tk} + 2^t dn \right).
\end{equation}

We begin by describing the construction of $\mathcal{C}_{f,n,k,1}$. For each $0 \le i \le 2^k - 1$, let $f_i: \{0,1\}^{n - k} \to \Units$ denote the restriction of $f$ obtained by fixing the first $k$ bits to the binary representation of $i$. For each $0 \le i \le 2^k$, define $g_i: \{0,1\}^{n - k} \to \Units$ by:
\begin{itemize}
\item $g_0 = f_0$.
\item $g_\ell = f_{\ell - 1}^* f_\ell$ if $1 \le \ell \le 2^k - 1$.
\item $g_{2^k} = f_{2^k - 1}^*$.
\end{itemize}

Observe that
\begin{equation}
f_i = \prod_{\ell = 0}^{i} g_\ell = \prod_{\ell = i+1}^{2^k} g_\ell^*.\label{eq:f_i_decomp}
\end{equation}
The key idea in the remainder of the proof is to evaluate $\bar{f}$ on a pair of inputs $(x, y)$ using the two decompositions in \eqref{eq:f_i_decomp}, one each for $x$ and $y$. Indeed, the following algorithm accomplishes this.

\begin{algorithm}[H]
\caption{Evaluate $\bar{f}^{\otimes 2}$}
\label{alg:eval_f_ifs}
\DontPrintSemicolon
\KwInput{$x, y \in \{0,1\}^n$, $c_x, c_y \in \{0,1\}$}
\KwOutput{$\bar{f}(x, c_x) \cdot \bar{f}(y, c_y)$}

$\alpha \coloneqq 1$

\uIf(\tcc*[f]{viewing $x,y$ as integers with highest order bits $x_1,y_1$}){$x \le y$}{
$m \coloneqq x$; $c_m \coloneqq c_x$ \tcc*[r]{set $m = \min\{x, y\}$, $M = \max\{x, y\}$}

$M \coloneqq y$; $c_M \coloneqq c_y$
}

\uElse{
$m \coloneqq y$; $c_m \coloneqq c_y$

$M \coloneqq x$; $c_M \coloneqq c_x$
}

\RepTimes{$0 \le \ell \le 2^k$}{
\uIf(\tcc*[f]{$x_\range{i}{j}$ denotes bits $i$ through $j$ of $x$}){$\ell \le m_\range{1}{k}$}{
Multiply $\alpha$ by $\bar{g}_\ell(m_\range{k+1}{n}, c_m)$
}
\uElseIf{$\ell > M_\range{1}{k}$}{
Multiply $\alpha$ by $\bar{g}_\ell(M_\range{k+1}{n}, 1 - c_{M})$ \tcc*[r]{note negation on $c_{M}$}
}
\uElse{Multiply $\alpha$ by $1$}
}

\Return{$\alpha$}
\end{algorithm}

Here, the $\ell \le m_\range{1}{k}$ clause corresponds to the multiplication $\prod_{\ell = 0}^{m_\range{1}{k}} g_\ell$, while the $\ell > M_\range{1}{k}$ clause corresponds to $\prod_{\ell = M_\range{1}{k}}^{2^n} g_\ell^*$. An equivalent reformulation of \Cref{alg:eval_f_ifs} is given below.

\begin{algorithm}[H]
\caption{Evaluate $\bar{f}^{\otimes 2}$}
\label{alg:eval_f_fixed}
\DontPrintSemicolon
\KwInput{$x, y \in \{0,1\}^n$, $c_x,c_y \in \{0,1\}$}
\KwOutput{$\bar{f}(x, c_x) \cdot \bar{f}(y, c_y)$}

$\alpha \coloneqq 1$

\uIf{$x \le y$}{
$m \coloneqq x$; $c_m \coloneqq c_x$

$M \coloneqq y$; $c_M \coloneqq c_y$
}

\uElse{
$m \coloneqq y$; $c_m \coloneqq c_y$

$M \coloneqq x$; $c_M \coloneqq c_x$
}

\RepTimes{$0 \le \ell \le 2^k$}{
$a \coloneqq \ind \{\ell \le m_\range{1}{k}\}$ \tcc*[r]{at most one of $a, b$ is nonzero}

$b \coloneqq \ind \{\ell > M_\range{1}{k}\}$

$z \coloneqq a \cdot m_\range{k+1}{n} \oplus b \cdot M_\range{k+1}{n}$

$c \coloneqq a \cdot c_m \oplus b \cdot (1 - c_M)$

Multiply $\alpha$ by $\bar{g}_\ell(z, c)$ 

Multiply $\alpha$ by $g_\ell^*(0^{n-k})^{(1 - a) \cdot (1 - b)}$ \tcc*[r]{undo added phase in case $a = b = 0$}
}

\Return{$\alpha$}
\end{algorithm}

\Cref{alg:eval_f_fixed} readily extends to a quantum circuit implementation. Define a pair of classical reversible circuits $\mathcal{A}_n$ and $\mathcal{B}_{n,k,\ell}$ whose input and output behavior are given in \Cref{fig:a_b}. Using $\mathcal{A}_n$ and $\mathcal{B}_{n,k,\ell}$, via the same strategy as \Cref{alg:eval_f_fixed}, we obtain the quantum circuit $\mathcal{C}_{f,n,k,1}$ defined in \Cref{fig:C_f_n_k_1} that implements $\bar{f}^{\otimes 2}$.

\begin{figure}[h]
\centering
\begin{subfigure}[c]{0.45\textwidth}
\[
\Qcircuit @C=1em @R=0.7em {
\lstick{x \in \{0,1\}^n}
& {/} \qw
& \multigate{2}{\mathcal{A}_n}
& \rstick{x} \qw
\\
\lstick{y \in \{0,1\}^n}
& {/} \qw
& \ghost{\mathcal{A}_n}
& \rstick{y} \qw
\\
\lstick{0}
& \qw
& \ghost{\mathcal{A}_n}
& \rstick{\ind\{x > y\}} \qw
\\
}
\]
\caption{}
\end{subfigure}
\qquad
\begin{subfigure}[c]{0.45\textwidth}
\[
\Qcircuit @C=1em @R=0.7em {
\lstick{0}
& \qw
& \multigate{3}{\mathcal{B}_{n,k,\ell}}
& \rstick{\ind\{\ell \le m_\range{1}{k}\}} \qw
\\
\lstick{0}
& \qw
& \ghost{\mathcal{B}_{n,k,\ell}}
& \rstick{\ind\{\ell > M_\range{1}{k}\}} \qw
\\
\lstick{m \in \{0,1\}^n}
& {/} \qw
& \ghost{\mathcal{B}_{n,k,\ell}}
& \rstick{m} \qw
\\
\lstick{M \in \{0,1\}^n}
& {/} \qw
& \ghost{\mathcal{B}_{n,k,\ell}}
& \rstick{M} \qw
\\
}
\]
\caption{}
\end{subfigure}
\caption{Inputs and outputs of reversible circuits $\mathcal{A}_n$ and $\mathcal{B}_{n,k,\ell}$.}
\label{fig:a_b}
\end{figure}
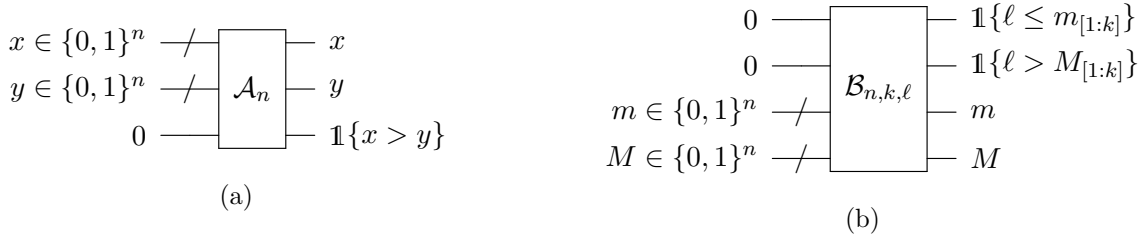

\begin{figure}
\small
\[
\Qcircuit @C=1em @R=1.2em @!R {
\lstick{\ket{0^{n-k}}}
& {/} \qw
& \qw
& \qw
& \qw
& \qw
& \qw
& \qw
& \qw
& \qw
& \targ
& \targ
& \qw
& \qw
& \ustick{_{z}} \qw
& \multigate{1}{\bar{g}_\ell}
& \qw
& \qw
& \targ
& \targ
& \qw
& \qw
& \qw
& \qw
& \qw
\\
\lstick{\ket{0}}
& \qw
& \qw
& \qw
& \qw
& \qw
& \qw
& \qw
& \qw
& \qw
& \qw
& \qw
& \targ
& \targ
& \ustick{_{c}} \qw
& \ghost{\bar{g}_\ell}
& \targ
& \targ
& \qw
& \qw
& \qw
& \qw
& \qw
& \qw
& \qw
\\
\lstick{\ket{0}}
& \qw
& \qw
& \qw
& \qw
& \qw
& \qw
& \qw
& \multigate{3}{\mathcal{B}_{n,k,\ell}}
& \ustick{_{a}} \qw
& \ctrl{-2}
& \qw
& \ctrl{-1}
& \qw
& \qw
& \multigate{1}{*}
& \qw
& \ctrl{-1}
& \qw
& \ctrl{-2}
& \multigate{3}{\mathcal{B}^\dagger_{n,k,\ell}}
& \qw
& \qw
& \qw
& \qw
\\
\lstick{\ket{0}}
& \qw
& \qw
& \qw
& \qw
& \qw
& \qw
& \qw
& \ghost{\mathcal{B}_{n,k,\ell}}
& \ustick{_{b}} \qw
& \qw
& \ctrl{-3}
& \qw
& \ctrl{-2}
& \qw
& \ghost{*}
& \ctrl{-2}
& \qw
& \ctrl{-3}
& \qw
& \ghost{\mathcal{B}^\dagger_{n,k,\ell}}
& \qw
& \qw
& \qw
& \qw
\\
\lstick{x}
& {/} \qw
& \multigate{2}{\mathcal{A}_n}
& \qw
& \qswap
& \qw
& \ustick{_{m}} \qw
& \qw
& \ghost{\mathcal{B}_{n,k,\ell}}
& \qw
& \ctrl{-4}
& \qw
& \qw
& \qw
& \qw
& \qw
& \qw
& \qw
& \qw
& \ctrl{-4}
& \ghost{\mathcal{B}^\dagger_{n,k,\ell}}
& \qw
& \qswap
& \multigate{2}{\mathcal{A}_n^\dagger}
& \qw
\\
\lstick{y}
& {/} \qw
& \ghost{\mathcal{A}_n}
& \qw
& \qswap
& \qw
& \ustick{_{M}} \qw
& \qw
& \ghost{\mathcal{B}_{n,k,\ell}}
& \qw
& \qw
& \ctrl{-5}
& \qw
& \qw
& \qw
& \qw
& \qw
& \qw
& \ctrl{-5}
& \qw
& \ghost{\mathcal{B}^\dagger_{n,k,\ell}}
& \qw
& \qswap
& \ghost{\mathcal{A}_n^\dagger}
& \qw
\\
\lstick{\ket{0}}
& \qw
& \ghost{\mathcal{A}_n}
& \ustick{_{x>y}} \qw
& \ctrl{-2}
& \ctrl{2}
& \qw
& \qw
& \qw
& \qw
& \qw
& \qw
& \qw
& \qw
& \qw
& \qw
& \qw
& \qw
& \qw
& \qw
& \qw
& \ctrl{2}
& \ctrl{-2}
& \ghost{\mathcal{A}_n^\dagger}
& \qw
\\
\lstick{c_x}
& \qw
& \qw
& \qw
& \qw
& \qswap
& \ustick{_{c_m}} \qw
& \qw
& \qw
& \qw
& \qw
& \qw
& \ctrl{-6}
& \qw
& \qw
& \qw
& \qw
& \ctrl{-6}
& \qw
& \qw
& \qw
& \qswap
& \qw
& \qw
& \qw
\\
\lstick{c_y}
& \qw
& \qw
& \qw
& \qw
& \qswap
& \ustick{_{c_M}} \qw
& \qw
& \qw
& \qw
& \qw
& \qw
& \qw
& \ctrlo{-7}
& \qw
& \qw
& \ctrlo{-7}
& \qw
& \qw
& \qw
& \qw
& \qswap
& \qw
& \qw
& \qw
\\
&&&&&&&&&&&&&& \raisebox{1.2em}{Repeat for each $0 \le \ell \le 2^k$}
\gategroup{1}{9}{9}{21}{4.6em}{.}
}
\]
\caption{Circuit diagram of $\mathcal{C}_{f,n,k,1}$. The Toffoli gates with controls acting on the $x$ and $y$ registers are understood to be arrays of $n-k$ Toffoli gates between the corresponding qubits of the control and target registers. The gate marked $*$ adds a phase of $g_\ell^*(0^{n-k})$ if both qubits are $\ket{0}$ and otherwise does nothing. For convenience, several of the wires are labeled with the values they take on corresponding to variables in \Cref{alg:eval_f_fixed}.}
\label{fig:C_f_n_k_1}
\end{figure}
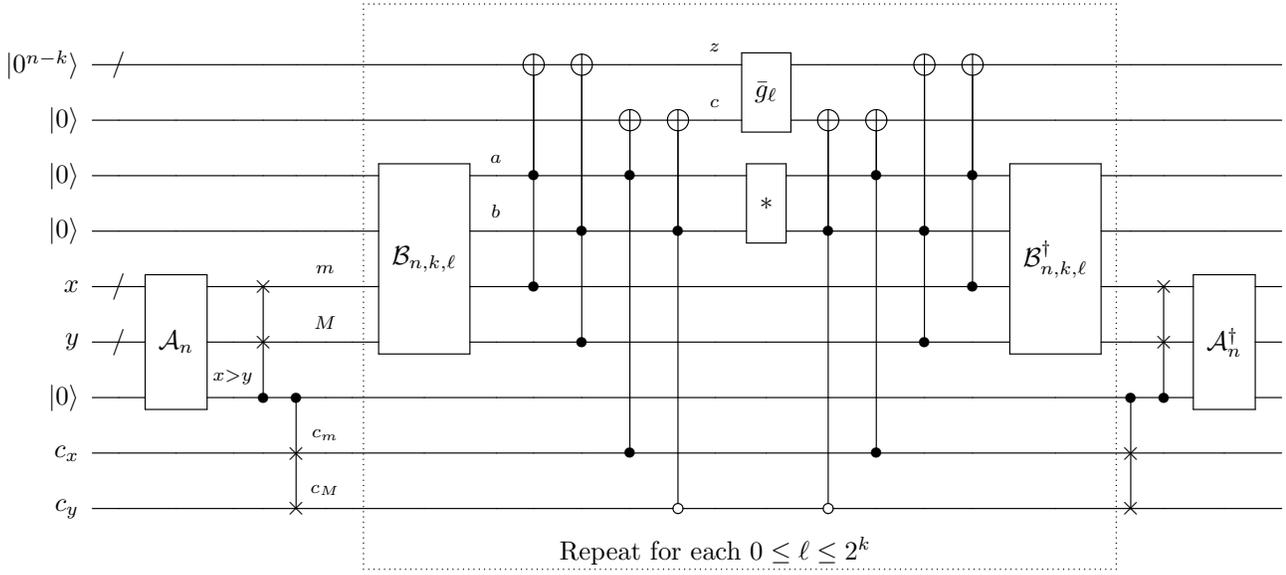

By \Cref{prop:multiplexed_rotation_naive}, for every $\ell$, $\bar{g}_\ell$ can be implemented using at most $2^{n-k}$ $\CNOT$ gates, because $\bar{g}_\ell$ is equivalent to an $(n-k,1)$-multiplexed $R_z$. Moreover, it is easy to see that $\mathcal{A}_n$ and $\mathcal{B}_{n,k,\ell}$ can be implemented using at most $O(n)$ $\CNOT$ gates each, because comparison of two $n$-bit integers can be performed by a classical circuit of at most $O(n)$ gates. As a consequence, we conclude that there exists a constant $d$ such that:
\begin{equation}
\label{eq:bound_s_nk1}
s_{n,k,1} \le \left(2^k + 1\right) \left(2^{n - k} + dn \right).
\end{equation}
This is certainly less than the bound in \eqref{eq:bound_s_nkt}, so this establishes the base case of the induction proof.

Now we proceed to the induction step on $t$. Suppose that for every $n > k \cdot (t - 1)$, we have a circuit $\mathcal{C}_{f,n,k,t-1}$ computing $\bar{f}^{\otimes 2^{t-1}}$ with $\CNOT$ count bounded by
\begin{equation}
\label{eq:bound_s_nkt-1}
s_{n,k,t-1} \le \left(2^k + 1\right)^{t-1} \left(2^{n - (t-1)k} + 2^{t-1} dn \right).
\end{equation}
To construct $\mathcal{C}_{f,n,k,t}$, we start by first taking $2^{t-1}$ copies of $\mathcal{C}_{f,n,k,1}$. Then, for each $0 \le \ell \le 2^k$, we replace each of the $2^{t-1}$ sub-circuits that compute $\bar{g}_\ell$ with $\mathcal{C}_{g_\ell,n-k,k,t-1}$. Then, the number of gates in $\mathcal{C}_{f,n,k,t}$ is bounded by:
\begin{align*}
s_{n,k,t} &\le \left(2^k + 1\right) \left(s_{n-k,k,t-1} + 2^{t-1} dn \right)\\
&\le \left(2^k + 1\right) \left(\left(2^k + 1\right)^{t-1} \left(2^{n - k - (t-1)k} + 2^{t-1} d(n-k) \right) + 2^{t-1} dn \right)\\
&\le \left(2^k + 1\right)^t \left(2^{n - tk} + 2^{t-1} dn \right) + \left(2^k + 1\right)2^{t-1} dn\\
&\le \left(2^k + 1\right)^t \left(2^{n - tk} + 2^t dn \right),
\end{align*}
where the first line substitutes \eqref{eq:bound_s_nkt-1} for the cost of the $\bar{g}_\ell$'s and otherwise uses the same bound as \eqref{eq:bound_s_nk1} for the non-$\bar{g}_\ell$ gates, and the second line applies the induction hypothesis \eqref{eq:bound_s_nkt-1}. This establishes the induction step, and thus \eqref{eq:bound_s_nkt} holds for every $n > k \cdot t$.

Choose $k = \lceil \log n \rceil$. Then:
\begin{align*}
s_{n,k,t} &\le \left(2^k + 1\right)^t \left(2^{n - tk} + 2^t dn \right)\\
&= 2^{kt} \left(1 + \frac{1}{2^k}\right)^t \left(2^{n - tk} + 2^t dn \right)\\
&\le 2^{kt} e^{t / 2^k} \left(2^{n - tk} + 2^t dn \right)\\
&\le 2^{kt} (1 + o(1)) \left(2^{n - tk} + 2^t dn \right)\\
&\le 2^{kt} (1 + o(1)) \left(2^{n - tk} + o\left(2^{n - tk}\right) \right)\\
&\le (1 + o(1))2^n,
\end{align*}
where we applied the exponential inequality in the third line, and used the assumption $t \le o(n / \log n)$ in the fourth and fifth lines. This proves the theorem.
\end{proof}

\Cref{thm:diagonal_mass_produce} straightforwardly generalizes to arbitrary multiplexed rotations and multiplexors with a single data qubit, as below.

\begin{corollary}
\label{cor:multiplexed_single_rotation}
Let $U$ be an $(n,1)$-multiplexed rotation, and let $r = 2^{o(n / \log n)}$. Then there exists a quantum circuit with at most $(1 + o(1))2^n$ $\CNOT$ gates to implement $U^{\otimes r}$.
\end{corollary}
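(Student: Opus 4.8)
The plan is to reduce an arbitrary multiplexed rotation to the multiplexed $R_z$ case already handled by \Cref{thm:diagonal_mass_produce}. First, recall the observation made when multiplexors were introduced: an $(n,1)$-multiplexed $R_z$ is exactly a unitary implementing $\bar f$ for some $f : \{0,1\}^n \to \Units$. Indeed, on the basis state $\ket{x}\ket{c}$ such a gate multiplies by $e^{-i\theta_x/2}$ when $c = 0$ and by $e^{i\theta_x/2} = (e^{-i\theta_x/2})^*$ when $c = 1$, so setting $f(x) = e^{-i\theta_x/2}$ recovers $\bar f$ on the nose. Hence if $U$ is an $(n,1)$-multiplexed $R_z$, then $U^{\otimes r} = \bar f^{\otimes r}$, and \Cref{thm:diagonal_mass_produce} immediately yields a circuit with $(1+o(1))2^n$ $\CNOT$ gates. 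This settles the $R_z$ case.

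For the $R_x$ and $R_y$ cases I would use a fixed single-qubit change of basis. Since $HZH = X$, one checks that $R_x(\theta) = H\,R_z(-\theta)\,H$ for every $\theta$, and similarly there is a fixed single-qubit unitary $V$ (independent of the angle) with $R_y(\theta) = V\,R_z(\pm\theta)\,V^\dagger$ for all $\theta$. Conjugating a multiplexor by a single-qubit gate applied to the data qubit is again a multiplexor (it acts as the identity on the select register, so computational basis states there are still preserved), and it conjugates each diagonal block individually. Therefore, if $U$ is an $(n,1)$-multiplexed $R_x$ with angles $(\theta_x)_{x\in\{0,1\}^n}$, then $U = (I^{\otimes n}\otimes H)\,U'\,(I^{\otimes n}\otimes H)$, where $U'$ is the $(n,1)$-multiplexed $R_z$ with angles $(-\theta_x)_x$; the analogous decomposition holds for $R_y$ with $V$ in place of $H$.

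Finally, I would pass to the $r$-fold tensor power. Because the conjugating layer acts only on the $r$ data qubits, $U^{\otimes r} = \bigl(I^{\otimes rn}\otimes H^{\otimes r}\bigr)\,(U')^{\otimes r}\,\bigl(I^{\otimes rn}\otimes H^{\otimes r}\bigr)$ (and likewise with $V$ for $R_y$), and the two outer layers consist entirely of single-qubit gates, so they contribute no $\CNOT$ gates. By the $R_z$ case, $(U')^{\otimes r}$ is implementable with $(1+o(1))2^n$ $\CNOT$ gates, hence so is $U^{\otimes r}$. I do not expect any genuine obstacle here; the only point needing (minor) care is verifying that the basis change relating $R_x$ and $R_y$ to $R_z$ can be chosen uniformly in the rotation angle, so that one fixed single-qubit gate simultaneously conjugates every diagonal block of the multiplexor.
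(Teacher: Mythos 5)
Your proposal is correct and follows essentially the same route as the paper: identify an $(n,1)$-multiplexed $R_z$ with $\bar f$ and invoke \Cref{thm:diagonal_mass_produce}, then handle $R_x$ and $R_y$ by conjugating the data qubit with a fixed, angle-independent single-qubit unitary, which costs no $\CNOT$ gates. Your explicit choice of $H$ (and a fixed $V$ for $R_y$) is just a concrete instantiation of the single-qubit unitaries the paper leaves unspecified.
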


\begin{proof}
The $R_z$ case follows by observing that $\bar{f}$ is exactly an $(n,1)$-multiplexed $R_z$ in \Cref{thm:diagonal_mass_produce}. This also extends to multiplexed $R_x$ and $R_y$, because multiplexed $R_x$, $R_y$, and $R_z$ are equivalent up to conjugation by single-qubit unitaries on the data qubit. That is, there exist single-qubit unitaries $U$ and $V$ such that:
\[
\hspace{2.5em}
\Qcircuit @C=1em @R=0.7em {
& {/} \qw
& \controlu \qw
& \qw
& \ccteq{1}
&&& {/} \qw
& \qw
& \controlu \qw
& \qw
& \qw
& \ccteq{1}
&&& {/} \qw
& \qw
& \controlu \qw
& \qw
& \qw
\\
& \qw
& \gate{R_z} \qwx
& \qw
& \ccteqg
&&& \qw
& \gate{U}
& \gate{R_x} \qwx
& \gate{U^\dagger}
& \qw
& \ccteqg
&&& \qw
& \gate{V}
& \gate{R_y} \qwx
& \gate{V^\dagger}
& \qw
}\\
\]
Hence, the $\CNOT$ count is identical for multiplexed $R_x$ and $R_y$ as well.
\end{proof}

\begin{corollary}
\label{cor:multiplexed_single_arbitrary}
Let $U$ be an $(n,1)$-multiplexor, and let $r = 2^{o(n / \log n)}$. Then there exists a quantum circuit with at most $(4 + o(1))2^n$ $\CNOT$ gates to implement $U^{\otimes r}$.
\end{corollary}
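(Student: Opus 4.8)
The plan is to decompose a generic $(n,1)$-multiplexor into a diagonal unitary on its $n$ select qubits composed with three multiplexed rotations, mass produce each factor separately using \Cref{thm:diagonal_mass_produce} and \Cref{cor:multiplexed_single_rotation}, and concatenate. Applying the standard ZYZ-with-global-phase decomposition $W = e^{i\varphi}R_z(\alpha)R_y(\beta)R_z(\gamma)$ of an arbitrary single-qubit unitary block by block (as in \cite{SBM06-synthesis}), any $(n,1)$-multiplexor $U$ can be written as
\[
U = (\Delta \otimes I) \cdot M_1 \cdot M_2 \cdot M_3,
\]
where $\Delta$ is a generic diagonal unitary on the $n$ select qubits (collecting the per-block phases $e^{i\varphi_x}$ into $\Delta = \sum_x e^{i\varphi_x}\ket{x}\bra{x}$), $M_1, M_3$ are $(n,1)$-multiplexed $R_z$ gates, and $M_2$ is an $(n,1)$-multiplexed $R_y$ gate; this works because matrix direct sum distributes over matrix product when the blocks are aligned. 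Since the tensor power is functorial, $(AB)^{\otimes r} = A^{\otimes r}B^{\otimes r}$, so
\[
U^{\otimes r} = (\Delta \otimes I)^{\otimes r} \cdot M_1^{\otimes r} \cdot M_2^{\otimes r} \cdot M_3^{\otimes r},
\]
and it suffices to bound the $\CNOT$ cost of each factor and add.

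By \Cref{cor:multiplexed_single_rotation}, each of $M_1^{\otimes r}$, $M_2^{\otimes r}$, $M_3^{\otimes r}$ can be implemented with at most $(1+o(1))2^n$ $\CNOT$ gates, contributing $3(1+o(1))2^n$ in total. For $(\Delta \otimes I)^{\otimes r}$, identify $\Delta$ with a function $f : \{0,1\}^n \to \Units$ via $\Delta\ket{x} = f(x)\ket{x}$. Append $r$ fresh ancilla qubits in state $\ket{0}$ and run the circuit of \Cref{thm:diagonal_mass_produce} for $\bar f^{\otimes r}$, with the $j$-th ancilla playing the role of the $c$-register of the $j$-th copy. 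Because every ancilla is $\ket{0}$, this multiplies $\ket{x_1}\cdots\ket{x_r}$ by $\prod_{j=1}^{r}\bar f(x_j,0) = \prod_{j=1}^{r} f(x_j)$ and leaves each ancilla in $\ket{0}$, i.e.\ it implements $\Delta^{\otimes r}$ tensored with the identity on the data and ancilla qubits, using at most $(1+o(1))2^n$ $\CNOT$ gates. Concatenating the four circuits (applying $M_3^{\otimes r}$, then $M_2^{\otimes r}$, then $M_1^{\otimes r}$, then the $\Delta^{\otimes r}$ circuit) yields a circuit for $U^{\otimes r}$ with at most $(1+o(1))2^n + 3(1+o(1))2^n = (4+o(1))2^n$ $\CNOT$ gates, with all ancillas reset to $\ket{0}$.

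The only real content is the block-diagonal decomposition of $U$, which is standard, plus the bookkeeping needed to feed the $n$-qubit diagonal factor into \Cref{thm:diagonal_mass_produce} through ancillas; everything else is immediate from \Cref{cor:multiplexed_single_rotation}. I expect the main subtlety to be making sure the decomposition is stated with the per-block global phases collected into a single $n$-qubit diagonal gate $\Delta$, rather than an $(n+1)$-qubit diagonal gate (which need not be of the form $\bar f$ for any $f$). A related pitfall to avoid: one should \emph{not} instead split $\Delta$ into $n$ multiplexed $R_z$ gates on $0,1,\dots,n-1$ select qubits and mass produce those individually, since \Cref{thm:diagonal_mass_produce} gives no useful bound once the number of select qubits is small relative to $\log r$; routing everything through the single $n$-select-qubit instance of \Cref{thm:diagonal_mass_produce} is what keeps the constant at $4$ and the argument valid.
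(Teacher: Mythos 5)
Your proposal is correct and matches the paper's proof in essence: the paper likewise uses the ZYZ-type decomposition of \cite[Theorem 6]{SBM06-synthesis} into three multiplexed rotations plus a diagonal on the select qubits, and handles the diagonal exactly as you do, by realizing it as a multiplexed $R_z$ acting on an ancilla fixed to $\ket{0}$ (i.e.\ an instance of \Cref{thm:diagonal_mass_produce}), then applies \Cref{cor:multiplexed_single_rotation} four times to get $(4+o(1))2^n$. The only difference is cosmetic: the paper phrases the diagonal factor as a fourth $(n,1)$-multiplexed rotation so that \Cref{cor:multiplexed_single_rotation} is invoked uniformly, whereas you invoke \Cref{thm:diagonal_mass_produce} directly with the $c$-registers set to $\ket{0}$, which is the same construction.
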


\begin{proof}
By \cite[Theorem 6]{SBM06-synthesis}, an arbitrary $(n,1)$-multiplexor may be implemented via a product of $4$ $(n,1)$-multiplexed rotations, as below.
\[
\hspace{2.5em}
\Qcircuit @C=1em @R=0.7em {
\lstick{\ket{0}}
& \qw
& \qw
& \qw
& \ccteq{2}
&& \lstick{\ket{0}}& \qw
& \qw
& \qw
& \qw
& \qw
& \qw
& \ccteq{2}
&& \lstick{\ket{0}}& \qw
& \qw
& \qw
& \qw
& \gate{R_z}
& \qw
\\
& {/} \qw
& \controlu \qw
& \qw
& \ccteqg
&&& {/} \qw
& \controlu \qw
& \controlu \qw
& \controlu \qw
& \gate{\Delta} \qw
& \qw
& \ccteqg
&&& {/} \qw
& \controlu \qw
& \controlu \qw
& \controlu \qw
& \controlu \qwx[-1] \qw
& \qw
\\
& \qw
& \gate{\phantom{U}} \qwx
& \qw
& \ccteqg
&&& \qw
& \gate{R_z} \qwx
& \gate{R_y} \qwx
& \gate{R_z} \qwx
& \qw
& \qw
& \ccteqg
&&& \qw
& \gate{R_z} \qwx
& \gate{R_y} \qwx
& \gate{R_z} \qwx
& \qw
& \qw
}
\]
Applying \Cref{cor:multiplexed_single_rotation} to each of the multiplexed rotations on the right side above completes the proof.
\end{proof}

\section{States and General Unitaries}

We now prove the main results of this work that generalize the mass production theorems above to state preparation and unitary compilation. The proofs proceed via the techniques of \cite{SBM06-synthesis}, by decomposing operators into multiplexors.

\thmstatemassproduce*

\begin{proof}
By \cite[Theorem 9]{SBM06-synthesis}, for any $n$-qubit quantum state $\ket{\psi}$, there exists an $(n-1)$-qubit state $\ket{\varphi}$ such that $\ket{\psi}$ has the following decomposition.
\[
\Qcircuit @C=1em @R=.7em
{
& {/} \qw
& \controlu \qwx[1] \qw
& \controlu \qwx[1] \qw
& \rstick{\ket{\varphi}} \qw
\\
& \qw
& \gate{R_z}
& \gate{R_y}
& \rstick{\ket{0}}\qw
\inputgroupv{1}{2}{1em}{1em}{\ket{\psi}}
}
\]
Applying this decomposition recursively, we conclude that $\ket{\psi}$ can be prepared by a circuit consisting of a pair of $(\ell,1)$-multiplexed rotations for each $1 \le \ell \le n-1$, and a pair of single-qubit gates.

Apply \Cref{cor:multiplexed_single_rotation} to the $(\ell,1)$-multiplexed rotations for each $\lceil n/2 \rceil \le \ell \le n-1$, and otherwise apply \Cref{prop:multiplexed_rotation_naive} $r$ times for each $1 \le \ell \le \lceil n/2 \rceil - 1$. Then the total number of $\CNOT$ gates to prepare $\ket{\psi}^{\otimes r}$ is upper bounded by
\begin{align*}
r \cdot \sum_{\ell = 1}^{\lceil n/2 \rceil - 1} 2^\ell + \sum_{\ell = \lceil n/2 \rceil}^{n-1} (1 + o(1)) 2^\ell
&\le r2^{\lceil n/2 \rceil} + (1 + o(1))2^n\\
&\le 2^{\lceil n/2 \rceil + o(n / \log n)} + (1 + o(1))2^n\\
&\le (1 + o(1))2^n\qedhere
\end{align*}
\end{proof}

\thmunitarymassproduce*

\begin{proof}
By \cite[Theorem 11]{SBM06-synthesis}, an arbitrary multiplexor can be expressed as below.
\[
\Qcircuit @C=1em @R=.7em
{
& {/} \qw
& \controlu \qwx[1] \qw
& \qw
& \ccteq{2}
&& {/} \qw
& \controlu \qwx[1] \qw
& \controlu \qwx[1] \qw
& \controlu \qwx[1] \qw
& \qw
\\
& \qw
& \multigate{1}{\phantom{U}}
& \qw
& \ccteqg
&& \qw
& \controlu \qwx[1] \qw
& \gate{R_y}
& \controlu \qwx[1] \qw
& \qw
\\
& {/} \qw
& \ghost{U}
& \qw
& \ccteqg
&& {/} \qw
& \gate{\phantom{U}}
& \controlu \qwx[-1] \qw
& \gate{\phantom{U}}
& \qw
\\
}
\]
This decomposition is also valid when the multiplexor on the left side of the equivalence has $0$ select bits. A recursive application of this decomposition implies that an arbitrary $n$-qubit unitary may be expressed as a product of $2^n - 1$ different $(n-1,1)$-multiplexors, of which $2^{n-1} - 1$ are multiplexed $R_y$ gates, and the remaining $2^{n-1}$ are arbitrary multiplexors. Applying \Cref{cor:multiplexed_single_rotation} and \Cref{cor:multiplexed_single_arbitrary} to these multiplexors gives the desired bound.
\end{proof}

\section{Conclusion and Outlook}
We have demonstrated that mass production phenomena are not unique to classical computation, and that they extend to quantum circuit complexity as well. As the message of this work is primarily conceptual in nature, we have not attempted to optimize every aspect of our results. Indeed, our mass production theorems could be extended further in a variety of ways; we outline a few such possibilities below.

If our results have any hope of being used in practice, then still more work needs to be done to optimize various constants. We suspect that the leading constant in \Cref{thm:unitary_mass_prod} could be brought down from $5/2$ to $1$ with a more clever decomposition into multiplexors. The factors hidden in the $o(1)$ could probably be optimized further as well, especially those related to the constant factor $d$ that appears in \Cref{thm:diagonal_mass_produce}. Indeed, we believe that much of the redundancy in computing and uncomputing $\mathcal{B}_{n,k,\ell}$ for each $0 \le \ell \le 2^k$ could be reduced by more careful accounting.

It is also worth attempting to optimize other parameters of practical relevance, such as constraints on the gate set, locality, depth, and ancilla qubit count. In principle, our proof should allow for some tradeoff between depth and ancilla count, because the $\bar{g}_\ell$s in \Cref{fig:C_f_n_k_1} can either be evaluated sequentially or in parallel. Another particularly interesting question is whether ancilla qubits are necessary at all to achieve quantum mass production.

We leave open the circuit complexity of quantum mass production in other parameter regimes. As \Cref{thm:state_mass_prod} and \Cref{thm:unitary_mass_prod} only apply when $r = 2^{o(n / \log n)}$, it is natural to ask what happens when $r$ is much larger. For Boolean functions, it is known that for any $n$-bit $f$, the ``asymptotic complexity'' of mass production $\lim_{r \to \infty} \frac{C(f^r)}{r}$ is bounded by $\poly(n)$ \cite{Pau76-disj,Alb89-simul}, where $C(f^r)$ denotes the Boolean circuit complexity of implementing $r$ copies of $f$. However, it is unclear whether the same approach would generalize to quantum circuits.

Lastly, we ask: are there any restricted examples of quantum circuits that exhibit a mass production phenomenon? What about Clifford circuits? We observe if one allows implementation by non-Clifford gates, then $n$ copies of an arbitrary Clifford operation can be implemented by a circuit with at most $O(n^\omega)$ gates, where $\omega$ is the exponent of matrix multiplication. By the ``canonical form theorem'' of Aaronson and Gottesman \cite{AG04-stabilizer}, every Clifford circuit can be expressed in the form H-C-P-C-P-C-H-P-C-P-C, where each letter corresponds to a layer of \underline{H}adamard, \underline{C}NOT, or \underline{p}hase gates. The Hadamard and phase layers contain at most $O(n)$ gates total, so it suffices to show how to implement $n$ copies of a $\CNOT$ circuit using $O(n^\omega)$ gates. For any $M \in \mathbb{F}_2^{n \times n}$, define $U_M$ as the unitary transformation that acts as $U_M \ket{x}\ket{y} = \ket{x}\ket{y \oplus Mx}$ on computational basis states. As every $\CNOT$ circuit implements an invertible linear transformation $\ket{x} \rightarrow \ket{Mx}$ for some $M \in \mathbb{F}_2^{n \times n}$, a $\CNOT$ circuit can be implemented using $U_M$ and $U_{M^{-1}}$ and $O(n)$ additional gates via:
\[
\ket{x}\ket{0^n} \xrightarrow{U_M} \ket{x}\ket{Mx} \xrightarrow{U_{M^{-1}}} \ket{0^n}\ket{Mx} \xrightarrow{\mathrm{SWAP}} \ket{Mx}\ket{0^n}.
\]
Then, as in \Cref{sec:intro}, we can mass produce $U_M$ and $U_{M^{-1}}$ using fast matrix multiplication.

\section*{Acknowledgments}
Part of this work was done while the author attended the 2022 Extended Reunion for the Quantum Wave in Computing at the Simons Institute for the Theory of Computing. We thank Alex Meiburg for helpful discussions.

\bibliographystyle{alphaurl}
\bibliography{../../../BibDesk/MainBibliography}

\newcommand{\etalchar}[1]{$^{#1}$}
\begin{thebibliography}{HBC{\etalchar{+}}22}

\bibitem[AG04]{AG04-stabilizer}
Scott Aaronson and Daniel Gottesman.
\newblock Improved simulation of stabilizer circuits.
\newblock {\em Phys. Rev. A}, 70:052328, Nov 2004.
\newblock \href {https://doi.org/10.1103/PhysRevA.70.052328}
  {\path{doi:10.1103/PhysRevA.70.052328}}.

\bibitem[Alb89]{Alb89-simul}
Andreas Albrecht.
\newblock On simultaneous realizations of {Boolean} functions, with
  applications.
\newblock In Gottfried Wolf, Tam{\'a}as Legendi, and Udo Schendel, editors,
  {\em Parcella '88}, pages 51--56, Berlin, Heidelberg, 1989. Springer Berlin
  Heidelberg.
\newblock \href {https://doi.org/10.1007/3-540-50647-0_102}
  {\path{doi:10.1007/3-540-50647-0_102}}.

\bibitem[AW21]{AVW21-mmul}
Josh Alman and Virginia~Vassilevska Williams.
\newblock A refined laser method and faster matrix multiplication.
\newblock In {\em Proceedings of the 2021 ACM-SIAM Symposium on Discrete
  Algorithms (SODA)}, pages 522--539, 2021.
\newblock \href {https://doi.org/10.1137/1.9781611976465.32}
  {\path{doi:10.1137/1.9781611976465.32}}.

\bibitem[BBCR10]{BBCR10-direct-sum}
Boaz Barak, Mark Braverman, Xi~Chen, and Anup Rao.
\newblock How to compress interactive communication.
\newblock In {\em Proceedings of the Forty-Second ACM Symposium on Theory of
  Computing}, STOC '10, pages 67--76, New York, NY, USA, 2010. Association for
  Computing Machinery.
\newblock \href {https://doi.org/10.1145/1806689.1806701}
  {\path{doi:10.1145/1806689.1806701}}.

\bibitem[Car22]{Car22-ptm}
Matthias~C. Caro.
\newblock Learning quantum processes and {Hamiltonians} via the {Pauli}
  transfer matrix, 2022.
\newblock \href {http://arxiv.org/abs/2212.04471} {\path{arXiv:2212.04471}}.

\bibitem[CCHL22]{CCHL22-memory}
Sitan Chen, Jordan Cotler, Hsin-Yuan Huang, and Jerry Li.
\newblock Exponential separations between learning with and without quantum
  memory.
\newblock In {\em 2021 IEEE 62nd Annual Symposium on Foundations of Computer
  Science (FOCS)}, pages 574--585, 2022.
\newblock \href {https://doi.org/10.1109/FOCS52979.2021.00063}
  {\path{doi:10.1109/FOCS52979.2021.00063}}.

\bibitem[Dru12]{Dru12-thesis}
Andrew~Donald Drucker.
\newblock {\em The complexity of joint computation}.
\newblock PhD thesis, Massachusetts Institute of Technology, 2012.
\newblock URL: \url{http://dspace.mit.edu/handle/1721.1/7582}.

\bibitem[DWZ22]{DWZ22-mmul}
Ran Duan, Hongxun Wu, and Renfei Zhou.
\newblock Faster matrix multiplication via asymmetric hashing, 2022.
\newblock \href {http://arxiv.org/abs/2210.10173} {\path{arXiv:2210.10173}}.

\bibitem[HBC{\etalchar{+}}22]{HBC+22-experiments}
Hsin-Yuan Huang, Michael Broughton, Jordan Cotler, Sitan Chen, Jerry Li, Masoud
  Mohseni, Hartmut Neven, Ryan Babbush, Richard Kueng, John Preskill, and
  Jarrod~R. McClean.
\newblock Quantum advantage in learning from experiments.
\newblock {\em Science}, 376(6598):1182--1186, 2022.
\newblock \href {https://doi.org/10.1126/science.abn7293}
  {\path{doi:10.1126/science.abn7293}}.

\bibitem[Hir22]{Hir22-mcsp}
Shuichi Hirahara.
\newblock {NP}-hardness of learning programs and partial {MCSP}.
\newblock In {\em 63rd {IEEE} Annual Symposium on Foundations of Computer
  Science, {FOCS} 2022, Denver, CO, USA, October 31 - November 3, 2022}, pages
  968--979. {IEEE}, 2022.
\newblock \href {https://doi.org/10.1109/FOCS54457.2022.00095}
  {\path{doi:10.1109/FOCS54457.2022.00095}}.

\bibitem[JKS10]{JKS10-direct-sum}
Rahul Jain, Hartmut Klauck, and Miklos Santha.
\newblock Optimal direct sum results for deterministic and randomized decision
  tree complexity.
\newblock {\em Information Processing Letters}, 110(20):893--897, 2010.
\newblock \href {https://doi.org/10.1016/j.ipl.2010.07.020}
  {\path{doi:10.1016/j.ipl.2010.07.020}}.

\bibitem[JRS03]{JRS03-direct-sum}
Rahul Jain, Jaikumar Radhakrishnan, and Pranab Sen.
\newblock A direct sum theorem in communication complexity via message
  compression.
\newblock In Jos C.~M. Baeten, Jan~Karel Lenstra, Joachim Parrow, and
  Gerhard~J. Woeginger, editors, {\em Automata, Languages and Programming},
  pages 300--315, Berlin, Heidelberg, 2003. Springer Berlin Heidelberg.
\newblock \href {https://doi.org/10.1007/3-540-45061-0_26}
  {\path{doi:10.1007/3-540-45061-0_26}}.

\bibitem[Pau76]{Pau76-disj}
Wolfgang~J. Paul.
\newblock Realizing {Boolean} functions on disjoint sets of variables.
\newblock {\em Theoretical Computer Science}, 2(3):383--396, 1976.
\newblock \href {https://doi.org/10.1016/0304-3975(76)90089-X}
  {\path{doi:10.1016/0304-3975(76)90089-X}}.

\bibitem[RS21]{RS21-kt}
Hanlin Ren and Rahul Santhanam.
\newblock {Hardness of KT Characterizes Parallel Cryptography}.
\newblock In Valentine Kabanets, editor, {\em 36th Computational Complexity
  Conference (CCC 2021)}, volume 200 of {\em Leibniz International Proceedings
  in Informatics (LIPIcs)}, pages 35:1--35:58, Dagstuhl, Germany, 2021. Schloss
  Dagstuhl -- Leibniz-Zentrum f{\"u}r Informatik.
\newblock \href {https://doi.org/10.4230/LIPIcs.CCC.2021.35}
  {\path{doi:10.4230/LIPIcs.CCC.2021.35}}.

\bibitem[SBM06]{SBM06-synthesis}
Vivek~V. Shende, Stephen~S. Bullock, and Igor~L. Markov.
\newblock Synthesis of quantum-logic circuits.
\newblock {\em IEEE Transactions on Computer-Aided Design of Integrated
  Circuits and Systems}, 25(6):1000--1010, 2006.
\newblock \href {https://doi.org/10.1109/TCAD.2005.855930}
  {\path{doi:10.1109/TCAD.2005.855930}}.

\bibitem[Sha49]{Sha49-circuit}
Claude.~E. Shannon.
\newblock The synthesis of two-terminal switching circuits.
\newblock {\em The Bell System Technical Journal}, 28(1):59--98, 1949.
\newblock \href {https://doi.org/10.1002/j.1538-7305.1949.tb03624.x}
  {\path{doi:10.1002/j.1538-7305.1949.tb03624.x}}.

\bibitem[Uhl74a]{Uhl74-russian}
Dietmar Uhlig.
\newblock On the synthesis of self-correcting schemes from functional elements
  with a small number of reliable elements.
\newblock {\em Matematicheskie Zametki}, 15(6):937--944, 1974.
\newblock In Russian.
\newblock URL: \url{http://mi.mathnet.ru/mz7425}, \href
  {https://doi.org/10.1007/BF01152835} {\path{doi:10.1007/BF01152835}}.

\bibitem[Uhl74b]{Uhl74-english}
Dietmar Uhlig.
\newblock On the synthesis of self-correcting schemes from functional elements
  with a small number of reliable elements.
\newblock {\em Mathematical notes of the Academy of Sciences of the USSR},
  15(6):558--562, 1974.
\newblock Translated from Russian.
\newblock \href {https://doi.org/10.1007/BF01152835}
  {\path{doi:10.1007/BF01152835}}.

\bibitem[Uhl92]{Uhl92-multiple}
Dietmar Uhlig.
\newblock {\em Networks Computing Boolean Functions for Multiple Input Values},
  pages 165--173.
\newblock London Mathematical Society Lecture Note Series. Cambridge University
  Press, 1992.
\newblock \href {https://doi.org/10.1017/CBO9780511526633.013}
  {\path{doi:10.1017/CBO9780511526633.013}}.

\end{thebibliography}

\end{document}